\newtheorem{theorem}{Theorem}
\newtheorem{lemma}{Lemma}[section]
\newtheorem{proposition}{Proposition}[section]
\newtheorem{corollary}{Corollary}[section]
\newtheorem{acknowledgment*}{Acknowledgment}
\numberwithin{equation}{section}
\newcommand{\uE}{\overline{E}}
\newcommand{\uP}{\underline{P}}
\newcommand{\oOmega}{\overline{\Omega}}
\newcommand{\be}{\begin{equation}}
\newcommand{\ee}{\end{equation}}
\newcommand{\bd}{\begin{displaymath}}
\newcommand{\ed}{\end{displaymath}}
\newcommand{\R}{\mathbb R}
\newcommand{\C}{\mathbb C}
\begin{document}
\Large
\begin{center}{\bf Concentration of Bloch eigenstates in the presence of gauge at the semi-classical limit  }\end{center}
\normalsize
\begin{center}
Gershon Wolansky\footnote{Department of mathematics, Technion, Haifa 32000, Israel. \  e.mail: gershonw@math.technion.ac.il}\end{center}
\begin{abstract}
We prove a concentration result of a Bloch eigenstate in a periodic channel under a constant gauge. In the semi-classical limit $h\rightarrow 0$ these eigenstates concentrate near a maximizer of the scalar potential of the associated Schr\"{o}dinger operator, provided the constant gauge converges to a critical value from above. This is in contrast with the ground states which concentrate for {\it any} gauge in this limit near a {\it minimizer} of the scalar potential.
\end{abstract}
\section{Introduction}
The effect of a gauge  corresponding to a null magnetic field on the  spectrum of the Schr\"{o}dinger operator is
well known in the case of non simply connected domains [H]. In particular, the spectrum of the Schr\"{o}dinger operator on the circle, manifested by the unit interval with periodic boundary conditions,
\be\label{sch}  L_{h,P}\psi:= -(h d/dx + iP)^2 \psi + V \psi \ ; \ x\in [0,1] \ \ ,
 \psi(0)=\psi(1) \ , \frac{d}{dx}\psi(0)=\frac{d}{dx}\psi(1) \ , \ee
  is affected by the constant gauge $P$. Here $V$ is a smooth, real potential which satisfies the periodic boundary conditions as well.
  \par
  Of particular interest is the effect of this gauge on the spectrum and eigenstates of (\ref{sch}) in the semi-classical limit
  $h\rightarrow 0$.
  The ground state of this operator was extensively studied  (see, e.g, [K]).
It is known that the normalized densities $|\psi|^2$ corresponding to the ground states converge, as $h\rightarrow 0$,  to a Dirac function concentrated at the {\em minimizer} of $V$ (if it is unique), while the ground eigenvalue converges to $\underline{E}:= \min V$. This result is independent of the prescribed gauge $P$. Indeed, the ground state is invariant with respect to the shift $P \mapsto P+ 2\pi h$, so the independence of the asymptotic density of the ground states is plausible. Note, however, that  higher order  features (such as the asymptotics of the spectral gap) are affected by the gauge [H].
\par
On the other hand, the semi classical limits of {\it Bloch states} of the Schr\"{o}dinger operator (\ref{sch}) are sensitive to the gauge already on the leading (macroscopic) order [E].  These are the eigenstates of $L_{h, P}$ which can be presented as $\psi=Ae^{i\theta}$ where the amplitude   $A=|\psi|$ is {\it strictly positive} and the phase  $\theta$  represents a function on the circle (i.e. $\theta(0)=\theta(1)$).
\section{Objectives}
The object of this note is to study some features of the semi-classical  limits of (\ref{sch}) for Bloch states. We focus on the case of {\it super critical Bloch states} corresponding to the eigenvalues above $\uE\equiv\max(V)$. In particular, we show the existence of a sequence of Bloch eigenstates $\psi_h$ whose  amplitude $|\psi_h|$ concentrate, as $h\rightarrow 0$, near a {\it maximizer} of the potential $x_0$ (where $V(x_0)=\uE$). In contrast to the ground states, the existence of such semiclassical limits are strongly conditioned on the gauge $P$.

 Let $V$  be a $1-$periodic potential which is maximized at a {\it unique} point $x_0\in [0, 1)$, such that
 \be\label{cond} \int_0^1(\uE-V)^{-1/2}=\infty\ee
 where $\uE:=\max V$.
Define
$$\uP:= 2^{-1/2}\int_0^1 \sqrt{\uE-V} \ \ . $$
The question we pose is as follows:\vskip .2in\noindent
 {\it  Is there a sequence of normalized
 eigenstates $\psi_h$ of $L_{h,P}$ which, for $h\rightarrow 0$ and $P\searrow \uP$, concentrate near the Dirac function
 $\delta_{x_0}$ while the corresponding eigenvalues converge to $\uE$?}
\vskip .2in

Let the function $E_0=E_0(P)\geq \uE$ defined by inverting
\be\label{invE} \int_0^1 \sqrt{E_0(P)-V} = \sqrt{2}P \ . \ee
The main result we prove is
\begin{theorem}\label{th1}
For any interval $[a,b]\subset ]0, 2\pi[$,  any sequence
 $\{\gamma_n\}\subset [a,b]$ and any $P>\uP$ there exists a sequence $P_n\rightarrow P$  and  a sequence of  normalized Bloch eigenstates    $$L_{h_n, P_n}(\phi_n)=E_0(P_n)\phi_n$$ where $h_n:=\frac{\sqrt{2} P_n}{2n \pi + \gamma_n}$  such that
 \be \label{limdelta}|\phi_n|^2\rightarrow\frac{(E_0(P)-V)^{-1/2}}{\int_0^1(E_0(P)-V)^{-1/2}}:= A^2_0(P) \ . \ee
 holds uniformly on $[0,1]$.
\end{theorem}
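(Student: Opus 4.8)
My plan is to pass to amplitude--phase (Madelung) variables, extract the leading profile explicitly, and then fix the gauge $P_n$ so that the Bloch quantization holds exactly. First I would write $\phi=Ae^{i\theta}$ with $A=|\phi|>0$, substitute into $L_{h,P}\phi=E\phi$, and separate real and imaginary parts. The imaginary part integrates to the conservation of current $(h\theta'+P)A^2=J$ with $J$ constant, and the real part becomes the amplitude equation
\[ -h^2A''+\frac{J^2}{A^3}+(V-E)A=0 . \]
Such a $\phi$ is a genuine Bloch eigenstate exactly when $A$ is a strictly positive $1$-periodic solution and the phase increment $\theta(1)-\theta(0)=h^{-1}\big(J\int_0^1A^{-2}-P\big)$ belongs to $2\pi\Z$; the normalization $\int_0^1A^2=1$ fixes $|J|$. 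I would emphasize at the outset that the existence of a \emph{positive periodic} $A$ is itself a quantization condition, so the amplitude problem and the phase condition are coupled and cannot be treated independently.

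Next I would extract the target profile. Dropping the $h^2A''$ term and setting $E=E_0(P)$ gives the algebraic identity $J^2A^{-4}=E_0-V$, i.e. $A^2=|J|(E_0-V)^{-1/2}$, which after normalization is precisely $A_0^2(P)$. Here the hypothesis $P>\uP$ enters decisively: by \er{invE} it forces $E_0(P)>\uE=\max V$, so $E_0-V$ is bounded away from zero, there are no turning points, and $A_0(P)$ is smooth and strictly positive. Equivalently, the gauge transformation $u=e^{iPx/h}\phi$ turns the problem into the standard Schr\"odinger equation $-h^2u''+Vu=Eu$ with quasi-periodic boundary condition $u(x+1)=e^{iP/h}u(x)$, whose oscillatory (allowed-band) Bloch modes have nodeless amplitude of the WKB form $|u|^2\sim(E-V)^{-1/2}$, again reproducing $A_0^2(P)$.

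The heart of the proof, and the step I expect to be the main obstacle, is the exact matching of parameters. In the transformed picture the Bloch condition reads $\Delta(E,h)=2\cos(P/h)$, where $\Delta$ is the Floquet discriminant and, in the oscillatory regime, $\Delta(E,h)=2\cos\big(h^{-1}\int_0^1\sqrt{E-V}+O(h)\big)$. Imposing $E=E_0(P_n)$ together with $h_n=\sqrt2\,P_n/(2n\pi+\gamma_n)$ pins the leading action, $h_n^{-1}\int_0^1\sqrt{E_0(P_n)-V}=2n\pi+\gamma_n$, so that $n$ plays the role of the band index and $\gamma_n$ of the reduced phase. The discriminant equation then becomes a single scalar equation for $P_n$, which I would solve by a continuity (intermediate value) argument, exploiting the continuous dependence of $\Delta$, $E_0$ and $h_n$ on $P_n$ and the fact that $\gamma_n$ stays in the interior interval $[a,b]\subset\,]0,2\pi[$; simultaneously one must verify that the positive periodic amplitude persists along the matched parameters. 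The delicate points are the uniform-in-$n$ solvability and the guarantee that $P_n$ can be taken arbitrarily close to $P$ for every $n$, and this is exactly where the strict inequality $P>\uP$ (the turning-point-free, oscillatory regime) is essential.

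Finally, once the sequence $P_n\to P$ has been secured, I would upgrade the leading-order description to uniform convergence. Controlling the $h_n^2A''$ term as a genuine perturbation of the algebraic profile yields $A_n=A_0(P_n)+O(h_n^2)$ uniformly on $[0,1]$, where $A_0(P)$ denotes the positive square root of $A_0^2(P)$. Since $P_n\to P$ and the map $P\mapsto A_0^2(P)$ is continuous in $C^0\big([0,1]\big)$ (again using $E_0(P)-V$ bounded below), it follows that $|\phi_n|^2=A_n^2\to A_0^2(P)$ uniformly on $[0,1]$, which is exactly the assertion \er{limdelta}.
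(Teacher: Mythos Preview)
Your reduction to Madelung variables and the extraction of the leading profile $A_0^2(P)$ coincide with the paper's opening moves, but the Floquet--discriminant matching you propose as the ``heart of the proof'' does not work under the theorem's parametrization. With $h_n=\sqrt{2}\,P_n/(2n\pi+\gamma_n)$ and $E=E_0(P_n)$, both the quasi--momentum $P_n/h_n=(2n\pi+\gamma_n)/\sqrt{2}$ and the leading action $h_n^{-1}\int_0^1\sqrt{E_0(P_n)-V}=2n\pi+\gamma_n$ are \emph{independent of $P_n$}. Your discriminant equation therefore reads, at leading order, $\cos(\gamma_n+O(h_n))=\cos\big((2n\pi+\gamma_n)/\sqrt{2}\big)$: two constants depending only on $n$ and $\gamma_n$, generically differing by an $O(1)$ amount, with only the $O(h_n)$ correction moving as $P_n$ varies. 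An intermediate--value argument in $P_n$ cannot close such a gap. In addition, the discriminant produces eigenvalues, not positive periodic amplitudes; you flag this (``simultaneously one must verify that the positive periodic amplitude persists'') but give no mechanism.

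The paper attacks the problem differently: it works directly on the amplitude ODE $h^2A''=-(E-V)A+\lambda A^{-3}$ with $\lambda=J^2$ \emph{fixed}, writes $A=A_0+\eta$, and obtains for $\eta$ a slowly varying oscillator $\ddot\eta=-\Omega_\lambda^2(hx)\,\eta+O(h)$ with $\Omega_\lambda(x)=2\sqrt{E-V(x)}$. The time--$h^{-1}$ Poincar\'e map $\Psi_h$ on the $(\eta,\dot\eta)$ plane is then shown to satisfy $e^{i\overline\Omega_\lambda/h}\Psi_h(Z)\to Z$ uniformly on circles, so the choice $h_n=\overline\Omega_\lambda/(2n\pi+\gamma_n)$ makes $\Psi_{h_n}$ close to the rigid rotation $Z\mapsto e^{-i\gamma_n}Z$; since $\gamma_n\in[a,b]\subset\,(0,2\pi)$, a topological degree argument forces a fixed point, i.e.\ a periodic $A_{h_n}$ near $A_0$. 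Thus the quantization on $h_n$ in the theorem is a resonance condition for the \emph{amplitude} oscillation, not a band/quasi--momentum matching. Only afterwards is the gauge determined, by the phase--periodicity constraint $P_n=J\int_0^1A_{h_n}^{-2}$, and $A_{h_n}\to A_0$ then yields $P_n\to P$. Your final assertion $A_n=A_0(P_n)+O(h_n^2)$ ``as a genuine perturbation'' is precisely what this fixed--point/degree step supplies and cannot be taken for granted: the linearized operator has $O(h^{-1})$ many near--resonances on $[0,1]$, and it is the degree argument on circles of arbitrarily small radius that controls them.
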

The following Corollary satisfies a partial answer to the question above:
{\begin{corollary}
 There exists $P_k\searrow\underline{P}$, $E_k\searrow\uE$,  $h_k\rightarrow 0$  and  normalized Bloch eigenstates
 $$L_{h_k, P_k}(\psi_k)= E_k\psi_k$$ such that $|\psi_{k}|^2\rightharpoonup \delta_{x_0}$ as  distributions.
\end{corollary}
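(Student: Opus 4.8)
The plan is to deduce the Corollary from \rth{th1} by letting $P\searrow\uP$ and performing a diagonal extraction. Two facts drive the argument: for each fixed $P>\uP$, \rth{th1} produces normalized Bloch states whose densities converge uniformly to the profile $A^2_0(P)$; and these profiles themselves concentrate at $x_0$ as $P\searrow\uP$. Before combining them I would record that $E_0(P)$ is well behaved: since $E_0\mapsto\int_0^1\sqrt{E_0-V}$ is continuous and strictly increasing on $[\uE,\infty)$ with value $\sqrt2\,\uP$ at $E_0=\uE$, the relation (\ref{invE}) defines $E_0(P)$ as a continuous, strictly increasing function with $E_0(\uP)=\uE$; hence $E_0(P)\searrow\uE$ as $P\searrow\uP$.

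The heart of the proof is the weak-$*$ convergence $A^2_0(P)\,dx\weakly\delta_{x_0}$ as $P\searrow\uP$. For a continuous $f$ I would write
$$\int_0^1 f\,A^2_0(P)\,dx=f(x_0)+\frac{\int_0^1\bigl(f-f(x_0)\bigr)\bigl(E_0(P)-V\bigr)^{-1/2}\,dx}{\int_0^1\bigl(E_0(P)-V\bigr)^{-1/2}\,dx}$$
and split the remaining integral into the regions $|x-x_0|<\delta$ and $|x-x_0|\ge\delta$. On the first region $|f-f(x_0)|$ is small by continuity, so after normalization its contribution is at most $\sup_{|x-x_0|<\delta}|f-f(x_0)|$. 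On the second region the uniqueness of the maximizer gives $\sup_{|x-x_0|\ge\delta}V<\uE$, so $\bigl(E_0(P)-V\bigr)^{-1/2}$ remains bounded there, while the denominator $\int_0^1\bigl(E_0(P)-V\bigr)^{-1/2}\,dx$ diverges to $+\infty$ as $E_0(P)\to\uE$, exactly by hypothesis (\ref{cond}). Thus the second region's contribution vanishes, and letting $\delta\to0$ yields $\int_0^1 f\,A^2_0(P)\,dx\to f(x_0)$. This step, where (\ref{cond}) is used to force the mass away from $x_0$ to disappear against the diverging normalization, is the main point of the argument; everything else is bookkeeping.

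Finally I would assemble the limits by diagonalization. Fix any $P^{(k)}\searrow\uP$ and a constant $\gamma_n\equiv\gamma\in\,]0,2\pi[$. For each $k$, \rth{th1} applied with $P=P^{(k)}$ gives $P^{(k)}_n\to P^{(k)}$ and normalized states $\phi^{(k)}_n$ with $L_{h^{(k)}_n,P^{(k)}_n}\phi^{(k)}_n=E_0(P^{(k)}_n)\phi^{(k)}_n$, $h^{(k)}_n\to0$, and $|\phi^{(k)}_n|^2\to A^2_0(P^{(k)})$ uniformly. Choosing $n=n(k)$ so large that $P^{(k)}_{n(k)}>\uP$, $\bigl\||\phi^{(k)}_{n(k)}|^2-A^2_0(P^{(k)})\bigr\|_\infty<1/k$, $h^{(k)}_{n(k)}<1/k$ and $|P^{(k)}_{n(k)}-P^{(k)}|<1/k$, I set $\psi_k:=\phi^{(k)}_{n(k)}$, $P_k:=P^{(k)}_{n(k)}$, $h_k:=h^{(k)}_{n(k)}$ and $E_k:=E_0(P_k)$. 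Then $h_k\to0$, $P_k\to\uP$ with $P_k>\uP$, and $E_k\to\uE$ with $E_k>\uE$ by monotonicity of $E_0$; passing to a subsequence makes $P_k\searrow\uP$ and $E_k\searrow\uE$ simultaneously. For continuous $f$,
$$\Bigl|\int_0^1 f\,|\psi_k|^2\,dx-\int_0^1 f\,A^2_0(P^{(k)})\,dx\Bigr|\le\|f\|_\infty/k\to0,$$
while the second integral tends to $f(x_0)$ by the previous paragraph, giving $|\psi_k|^2\weakly\delta_{x_0}$ and completing the proof.
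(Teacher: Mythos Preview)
Your proof is correct and follows the same overall strategy as the paper: apply \rth{th1} along a sequence $P^{(k)}\searrow\uP$, use that $A_0^2(P)\rightharpoonup\delta_{x_0}$ as $P\searrow\uP$, and extract a diagonal subsequence. The paper's argument is essentially a terse version of yours; in particular, you make explicit (via the split near and away from $x_0$ and the divergence of the normalizing integral from (\ref{cond})) the concentration $A_0^2(P)\rightharpoonup\delta_{x_0}$, which the paper simply asserts.
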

The Corollary follows easily form Theorem~\ref{th1}. Fix $\gamma\in(0, 2\pi)$ and let $\hat{P}_k\searrow \underline{P}$. By  Theorem~\ref{th1} we can find $P_{n,k}\rightarrow\hat{P}_k$ as $n\rightarrow\infty$  and $h_n= \frac{\sqrt{2} \underline{P}}{2n \pi + \gamma}$ so that $\phi_{n,k}$ are normalized Bloch eigenfunctions of $L_{h_n, \hat{P}_{n,k}}$ subjected to the eigenvalue $E(\hat{P}_k)$ and $|\phi_{n,k}|^2\rightarrow A_0^2(\hat{P}_k)$ uniformly as $n\rightarrow\infty$. Then, by Theorem~\ref{th1} it follows
$$ \lim_{k\rightarrow\infty}\lim_{n\rightarrow\infty} |\phi_{n,k}|^2=\lim_{k\rightarrow\infty} A_0^2(\hat{P}_k)=\delta_{x_0}$$
in distributions.
The Corollary then follows by taking a subsequence  $\psi_n=\phi_{n, k(n)}$ where $k(n)\rightarrow\infty$ as $n\rightarrow\infty$ slow enough.
\section{Proof of the main result}
 Each  eigenstate is a critical point of the functional
 $$ F_h(\psi):= \int_0^1 V|\psi|^2 + |h\psi^{'} + iP\psi|^2  \ ; \ \ \int_0^1|\psi|^2 =1$$
 Let us consider the Bloch  states  represented as
 $\psi=Ae^{i\theta/h}$ where $A>0$ and  $\theta$ satisfy the periodic condition on $[0,1]$. Then
 $$ F_h(A, \theta)= \int_0^1 A^2 |\theta^{'}+P|^2 + VA^2 + h^2 |A^{'}|^2$$
Since  $A>0$ on $[0,1]$  there is only one critical point of $F_h$ with respect to $\theta$, namely
$$ \left((\theta^{'} + P)A^2\right)^{'}=0 \ \ \ \Rightarrow \theta^{'} + P = \frac{\lambda}{A^2}$$
for some $\lambda\in \R$. Since $\theta$ satisfies the periodic boundary condition  $\theta(0)=\theta(1)$  it follows
$$ P=\lambda\int_0^1 A^{-2}  \Rightarrow A^2(\theta^{'}+P)^2 = \lambda^2/A^2= P^2 A^{-2} \left(\int_0^1 A^{-2}\right)^2 \ , $$

so $\overline{F}_h(A):= \min_{\theta}F_h(A, \theta)$ takes the value
$$ \overline{F}_h(A)=P^2\left(\int_0^1A^{-2}\right)^{-1} + \int_0^1 VA^2 + h^2 |A^{'}|^2  \ . $$

A critical point of $\overline{F}_h$, then, satisfies
\be\label{el} h^2 A^{''} = (V-E_h(P))A + \frac{ 2P^2 A^{-3}}{\left(\int_0^1 A^{-2}\right)^2} \ \ , \ \ \ A(0)=A(1), \ \ \ A^{'}(0)=A^{'}(1) \ . \ee
 Here $E_h(P)$ is the Lagrange multiplier due to the constraint $\int_0^1A^2=1$ and corresponds to an eigenvalue of the Shrodinger operator.

So, let us set $h=0$ in (\ref{el}) to obtain
$$ A_0^2=\frac{\sqrt{2}P}{\int_0^1 A_0^{-2}}(E_0(P) -V)^{-1/2} \ . $$
This can be solved only if $E_0(P)$ is compatible with (\ref{invE})
and, in particular, only for  $P>\uP:= 2^{-1/2}\int_0^1 \sqrt{\uE-V}$. In that case  we get
$$ A^2_0(P)=\frac{(E_0(P)-V)^{-1/2}}{\int_0^1(E_0(P)-V)^{-1/2}}  \ . $$

Setting
$$ \lambda:= \left( \int_0^1(E_0(P)-V)^{-1/2}\right)^{-2} \ \ \ , \ \ \ E(\lambda):= E_0(P) \ \ , $$
Theorem~\ref{th1} is obtained from
\begin{proposition}\label{prop}
Let
 \be\label{omegalam}\oOmega_\lambda:=   2 \int_0^1\sqrt{ E(\lambda)-V}\ee
and \be\label{omegalam1} A_0(\lambda):= \frac{\lambda^{1/4}}{(E(\lambda)-V)^{1/4}}  \ . \ee
Then, for each each $\lambda>\left( \int_0^1(\uE-V)^{-1/2}\right)^{-2}$ and $\gamma\in (0, 2\pi)$  there exists $N(\gamma)>0$ and a
 solution $A_h$ of
$$ h^2 A_{h}^{''} = -(E(\lambda)-V)A_{h} + \lambda A_{h}^{-3}$$
which satisfies the periodic boundary condition on $[0,1]$
provided $h=\frac{\oOmega_{\lambda}}{2n \pi + \gamma}$ and any integer $n> N(\gamma)$. Moreover, $A_h\rightarrow A_0$ as $n\rightarrow\infty$.
\end{proposition}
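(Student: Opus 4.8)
The plan is to treat the ODE as a singular perturbation of its explicit $h=0$ solution $A_0 = \lambda^{1/4}(E(\lambda)-V)^{-1/4}$ and to produce the periodic solution $A_h$ by a contraction argument for the correction $B := A_h - A_0$. Writing $f(x,A) = -(E(\lambda)-V)A + \lambda A^{-3}$, one has $f(x,A_0)\equiv 0$ and, since $\lambda A_0^{-4}=E(\lambda)-V$, the linearization is $f_A(x,A_0) = -(E(\lambda)-V)-3\lambda A_0^{-4} = -4(E(\lambda)-V)$. Substituting $A = A_0+B$ and isolating the linear part gives the equivalent periodic problem
\be\label{fp} L_h B = -h^2 A_0'' + N(B), \qquad L_h := h^2\frac{d^2}{dx^2} + 4(E(\lambda)-V), \ee
where $N(B) := f(x,A_0+B)-f(x,A_0)-f_A(x,A_0)B$ is a purely pointwise operator with $N(B)=O(B^2)$ and $\|N(B_1)-N(B_2)\|_\infty \le C(\|B_1\|_\infty+\|B_2\|_\infty)\|B_1-B_2\|_\infty$ for $B_i$ small, by smoothness of $A\mapsto \lambda A^{-3}$ away from $0$; crucially it loses no derivatives. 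Since $A_0$ is bounded below by a positive constant, positivity of $A_h=A_0+B$ is automatic once $\|B\|_\infty$ is small. The scheme thus reduces to inverting $L_h$ on $1$-periodic functions with good control, and running the contraction in the sup norm so the nonlinearity is harmless.

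The heart of the matter is the invertibility of $L_h$ and the size of $\|L_h^{-1}\|_{\infty\to\infty}$. The operator $L_h$ is a semiclassical Hill operator with strictly positive coefficient $4(E(\lambda)-V)>0$, so its fundamental solutions are oscillatory with local wavenumber $2\sqrt{E(\lambda)-V}/h$; there are no turning points, hence no Maslov contribution. A standard semiclassical (WKB) analysis shows that the monodromy matrix $M$ of $h^2 B'' + 4(E(\lambda)-V)B = 0$ over one period is, to leading order, conjugate (by a fixed, $n$-independent scaling matrix built from $(E(\lambda)-V)^{\pm 1/4}$, which is periodic) to a rotation by the total phase
\bd \Phi := \frac1h\int_0^1 2\sqrt{E(\lambda)-V}\,dx = \frac{\oOmega_\lambda}{h}, \ed
with an error $O(h)$ in the rotation angle. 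This is exactly why the quantization $h = \oOmega_\lambda/(2n\pi+\gamma)$ is imposed: it forces $\Phi = 2n\pi+\gamma$, so the angle equals $\gamma + O(h) \pmod{2\pi}$. Consequently $\det(M-I) = 4\sin^2(\Phi/2) + O(h) = 4\sin^2(\gamma/2)+O(h)$, which for $\gamma$ bounded away from $0$ and $2\pi$ and $n$ large is bounded below \emph{uniformly in $n$}; thus $\|(M-I)^{-1}\| = O(1)$. Representing the periodic inverse of $L_h$ by variation of parameters (a particular solution plus the homogeneous correction $(M-I)^{-1}(\cdots)$ enforcing periodicity), and using that the Wronskian of the normalized WKB solutions is $\asymp 2/h$, one obtains the two bounds that drive the argument: $\|L_h^{-1}g\|_\infty \le C h^{-1}\|g\|_\infty$ for general periodic $g$, while for a \emph{slowly varying} $g$ the oscillatory integrals gain an extra power of $h$, giving $\|L_h^{-1}g\|_\infty \le C\|g\|_\infty$. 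This dichotomy is the main obstacle, and it is precisely where the hypothesis $\gamma\in(0,2\pi)$ (and the need for $n>N(\gamma)$, with $N(\gamma)\to\infty$ as $\gamma\to 0,2\pi$) is used.

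With these estimates the contraction closes routinely. The forcing term $-h^2A_0''$ in \er{fp} is slowly varying, so $\|L_h^{-1}(-h^2A_0'')\|_\infty \le C_2 h^2$, whereas the nonlinear term obeys $\|L_h^{-1}N(B)\|_\infty \le C_1 h^{-1}\|B\|_\infty^2$. Define $T(B) := L_h^{-1}(-h^2A_0'' + N(B))$ on the ball $\{\|B\|_\infty \le \rho\}$ of $1$-periodic functions. Choosing $\rho = h/(4C_1)$ makes the Lipschitz constant of $T$ at most $2C_1 h^{-1}\rho \le \tfrac12$, and the self-mapping condition $C_2h^2 + C_1 h^{-1}\rho^2 \le \rho$ holds once $h \le (8C_1C_2)^{-1}$, i.e. once $n > N(\gamma)$. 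The contraction mapping principle then yields a unique small $1$-periodic fixed point $B$, hence a positive $1$-periodic solution $A_h = A_0 + B$ of the ODE with $\|A_h-A_0\|_\infty = O(h) = O\!\big(\oOmega_\lambda/(2n\pi+\gamma)\big) \to 0$. This gives uniform convergence $A_h \to A_0$ on $[0,1]$ as $n\to\infty$, completing the proof. The threshold $N(\gamma)$ simultaneously absorbs the requirements that $h$ be small enough for the WKB angle error to stay below the distance from $\gamma$ to $\{0,2\pi\}$ and for the smallness and positivity conditions above.
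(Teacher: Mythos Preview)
Your argument is correct and arrives at the same linearization $h^2B''+4(E(\lambda)-V)B=-h^2A_0''+N(B)$ as the paper, but from there the two proofs diverge. The paper rescales $x\mapsto x/h$, $\eta\mapsto h\eta$, passes to Pr\"ufer (action--angle) variables $Z=Re^{i\Theta}$, and studies the time--$h^{-1}$ map $\Psi_h$ of the resulting non-autonomous planar system; a further near-identity change of variables shows $\Psi_{h_n}(Z)\to e^{-i\gamma}Z$ uniformly on each circle $|Z|=r$, and a topological degree argument then forces a fixed point of $\Psi_{h_n}$ inside $\{|Z|<r\}$ for every small $r$. You instead invert $L_h$ directly on the space of $1$-periodic functions, computing the monodromy via WKB (no turning points) to get $\det(M-I)=4\sin^2(\gamma/2)+O(h)$ and hence $\|L_h^{-1}\|_{\infty\to\infty}=O(h^{-1})$ together with the improved $O(1)$ bound on slowly varying forcing, and then close by a Banach contraction in a ball of radius $O(h)$. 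Both routes hinge on the same non-resonance fact---that the total phase $\oOmega_\lambda/h=2n\pi+\gamma$ keeps the linearized period map a rotation by $\gamma\not\equiv 0\pmod{2\pi}$---but yours is more quantitative: it yields local uniqueness and the explicit rate $\|A_h-A_0\|_\infty=O(h)$, at the price of having to establish the two $L_h^{-1}$ bounds carefully (the variation-of-parameters/Green's-function computation with Wronskian $\asymp h^{-1}$ and the stationary-phase gain for smooth forcing). The paper's degree-theoretic argument is softer and shorter, needing only the uniform limit of $\Psi_{h_n}$ on circles, but gives neither uniqueness nor a rate.
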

\begin{proof}
Let $A_h=A_0+\eta$. Then $\eta$ satisfies
$$ h^2\eta^{''}= -\Omega^2_\lambda(x)\eta + Q(x, \eta, h) \eta  + h^2A_0^{''}$$
where
\be\label{omegadef}\Omega^2_\lambda(x)= E(\lambda)-V +3\lambda A_0^{-4}(\lambda)\equiv 4(E(\lambda)-V)\ee by (\ref{omegalam}, \ref{omegalam1}) and $Q(x,0, h)\equiv 0$. We now scale $x\rightarrow x/h$ $d/dx()\rightarrow  \dot{()}$ and  $\eta\rightarrow h \eta$ to obtain
$$ \ddot{\eta}= -\Omega^2_h(h x)\eta + h \left( \hat{Q}_h(h x, \eta)\eta -A_0^{''}(h x)\right) $$
where
\be\label{qdef}\hat{Q}_h(\eta,x):= h^{-1}Q(h\eta, x, h)=: q(x) \eta + h q_h(x, \eta) \ . \ee

Let us now set
$$ \eta=R\cos\Theta \ , \ \ \ \dot{\eta}= \Omega_\lambda(h x)R\sin\Theta \ . $$
so
\be\label{1} \dot{R} \cos\Theta - \dot{\Theta}R\sin\Theta= \Omega_\lambda(h x) R\sin\Theta\ee
\be\label{2}\left(\dot{R} \sin\Theta + \dot{\Theta}R \cos\Theta\right)\Omega_\lambda(h x)= -\Omega^2_\lambda R\cos\Theta+ h H_h(h x, R, \Theta) \ee
where
 \be\label{hdef}H_h=  \hat{Q}_h(h x, R\sin\Theta)R\sin\Theta -A_0^{''}(h x)-\Omega^{'}_\lambda(h x)R\sin\Theta\ee
Multiply (\ref{1}) by $\cos\Theta$, (\ref{2}) by $\Omega_\lambda^{-1}\sin\Theta$ and sum to obtain
\be\label{sys1} \dot{R}=h \frac{\sin\Theta}{\Omega_\lambda} H_h(h x, R,\Theta)\ee
Likewise, multiply (\ref{1}) by $-\sin\Theta/R$, (\ref{2}) by $\cos\Theta/(R\Omega_\lambda)$ and sum to obtain
\be\label{sys2} \dot{\Theta}= -\Omega_\lambda(h x)+ \frac{h\cos\Theta}{R\Omega_\lambda}H_h(h x , R, \Theta) \ . \ee
In complex notation,  $Z=Re^{i\Theta}$ and (\ref{sys1}, \ref{sys2}) takes the form of single equation in the complex 
plane $\C$:
\be\label{sys3}
\dot{Z}= -i\Omega_\lambda(h x) Z + ih \frac{H_h(h x,Z, Z^c)}{\Omega_\lambda(h x)}\ee
where $Z^c:= Re^{-i\Theta}$.
 Let $\Psi_h:\C\rightarrow\C$ be the map obtained from the solutions of (\ref{sys3}) at time $h^{-1}$, that is:
$$ \Psi_h\left(Z(0)\right):=Z(h^{-1}) \ . $$
We need to prove that $\Psi_h$ has a fixed point for $h$ sufficiently small, under the stated conditions. 

Apply the transformation
\be\label{ztrans} Z \mapsto \tilde{Z}:= Z-h\frac{\cos\theta}{R\Omega_\lambda}\ee
and substitute in (\ref{sys3}) to obtain
\be\label{sys4} \dot{\tilde{Z}}= -i\Omega_\lambda(h x) \tilde{Z}+ O(h^2) \ee
Integrating (\ref{sys4}) on the interval $[0, h^{-1}]$ to obtain
$$ \tilde{Z}\left( h^{-1}\right) = \tilde{Z}(0)e^{-i\overline{\Omega}_\lambda/h} + O(h)$$
where $\oOmega_\lambda=\int_0^1\Omega_\lambda(x) dx$ as defined in (\ref{omegalam}).

Note that the transformation (\ref{ztrans}) is singular at $R=0$, so the term $O(h^2)$  is also singular at $R=0$. However, the conjugancy  between (\ref{sys3}) and (\ref{sys4}) is still valid away from $R=0$. In particular we obtain
\begin{lemma}
 The limit
 $$ \lim_{h\rightarrow 0} e^{i\oOmega_\lambda/h}\Psi_h(Z)=Z$$
 holds uniformly on  $|Z|=r$ for each $r>0$. In particular, if $h_n=\frac{\oOmega_{\lambda}}{2n \pi + \gamma}$  then
 \be\label{zlim} \lim_{n\rightarrow \infty} \Psi_{h_n}(Z)=e^{-i\gamma}Z\ee
 uniformly on $|Z|=r$.
 \end{lemma}
 We complete the proof by utilizing the following fixed point theorem which follows from a  "soft" topological argument
 \begin{lemma}
 Assume $\Psi: \{|Z|\leq r\} \rightarrow \C$ is a  of continuous function.   Given $\gamma\in(0, 2\pi)$, if $|\Psi(Z)-e^{-i\gamma} Z|$ is sufficiently small for $|Z|=r$
 then $\Psi$ has a fixed point $z_0$ in $\{|Z|<r\}$.
 \end{lemma}
 \begin{proof}
 Assume the contrary. Then the function
 $$ \Phi(z)= \frac{\Psi(Z)-Z}{|\Psi(Z)-Z|}: \{|Z|\leq r\} \rightarrow \{|Z|=1\}$$
 is continuous. By assumption
 $|\Phi(Z)-e^{i\alpha}Z|$ is uniformly small on $|Z|=r$ where $e^{i\alpha}= \frac{1-e^{-i\gamma}}{|1-e^{-i\gamma}|}$. In particular, the winding number of the mapping $\Psi$ restricted to the circle $\{|Z|=r\}$ equals one, so the topological degree of $\Phi$ equals one as well. By degree theory [OCQ] $\Phi=0$ must have a root in the disk $\{|Z|\leq r\}$, which is clearly impossible.
 \end{proof}
 The proof of Proposition~\ref{prop} (and Theorem~\ref{th1}) now  follows since $r>0$ can be chosen arbitrary small if $h$ is small enough. In particular, the amplitude of $\eta=A_h-A_0$ is arbitrary small for $h$ small enough.
\end{proof}
\begin{center}{\bf References}\end{center}
\begin{description}
 \item{[E]} Evans L.C: {\it Towards a Quantum Analog of Weak KAM Theory }  , Communications in Mathematical Physics, Volume 244, Issue 2, (2004) 311-334
\item{[H]} Helffer, B: {\it Semi-Classical Analysis for the Schr\"{o}dinger Operator and Applications}, Lec. Notes in Math., 1336, (1988)
\item{[OJQ]}   O'Regan, D ,  Je Cho, Y and  Yu-Qing, C:  {\it Topological Degree Theory and Applications}, Series in Mathematical Analysis and Applications, (2006)
 \item {[K]} Kuchment P: {\it Floquet Theory for Partial Differential Equations}, Operator Theory, Adv. and Appl., Vol. 60, Birkhauser Verlag 1993
\end{description}

\end{document}